\tikzstyle{gate}=[shape=rectangle, text height=1.5ex, text depth=0.25ex, yshift=0.5mm, fill=white, draw=black, minimum height=5mm, yshift=-0.5mm, minimum width=5mm, font={\small}, tikzit category=circuit]
\tikzstyle{big gate}=[shape=rectangle, text height=1.5ex, text depth=0.25ex, yshift=0.5mm, fill=white, draw=black, minimum height=10mm, yshift=-0.5mm, minimum width=5mm, font={\small}, tikzit category=circuit]
\tikzstyle{Z dot}=[inner sep=0mm, minimum size=2mm, shape=circle, draw=black, fill=zxgreen, tikzit fill={rgb,255: red,221; green,255; blue,221}, tikzit category=zx]
\tikzstyle{Z bold dot}=[inner sep=0mm, minimum size=2mm, shape=circle, draw=black, fill=zxgreen, tikzit fill={rgb,255: red,221; green,255; blue,221}, line width=1.2pt, tikzit category=zx]
\tikzstyle{Z phase dot}=[minimum size=5mm, font={\footnotesize\boldmath}, shape=rectangle, rounded corners=2mm, inner sep=0.2mm, outer sep=-2mm, scale=0.8, tikzit shape=circle, draw=black, fill=zxgreen, tikzit fill={rgb,255: red,221; green,255; blue,221}, tikzit draw=blue, tikzit category=zx]
\tikzstyle{Z tiny dot}=[inner sep=0mm, minimum size=1mm, shape=circle, draw=black, fill=zxgreen, tikzit fill={rgb,255: red,221; green,255; blue,221}]
\tikzstyle{X dot}=[Z dot, shape=circle, draw=black, fill=zxred, tikzit fill={rgb,255: red,255; green,136; blue,136}, tikzit category=zx]
\tikzstyle{X bold dot}=[inner sep=0mm, minimum size=2mm, shape=circle, draw=black, fill=zxred, tikzit fill={rgb,255: red,255; green,136; blue,136}, line width=1.2pt, tikzit category=zx]
\tikzstyle{X phase dot}=[Z phase dot, tikzit shape=circle, tikzit draw=blue, fill=zxred, tikzit fill={rgb,255: red,255; green,136; blue,136}, font={\footnotesize\boldmath}, tikzit category=zx]
\tikzstyle{X tiny dot}=[inner sep=0mm, minimum size=1mm, shape=circle, draw=black, fill=zxred, tikzit fill={rgb,255: red,255; green,136; blue,136}]
\tikzstyle{hadamard}=[fill=yellow, draw=black, shape=rectangle, inner sep=0.6mm, minimum height=1.5mm, minimum width=1.5mm, tikzit category=zx]
\tikzstyle{paulibox}=[fill={rgb,255: red,221; green,221; blue,255}, draw=black, shape=rectangle, inner sep=0.6mm, minimum height=5mm, minimum width=5mm, font={\footnotesize}, text height=1.5ex, text depth=0.25ex, tikzit category=zx]
\tikzstyle{vertex}=[inner sep=0.2mm, minimum size=1mm, shape=circle, draw=black, fill=black, tikzit category=misc]
\tikzstyle{vertex set}=[inner sep=0.2mm, minimum size=1mm, shape=circle, draw=black, fill=white, font={\footnotesize\boldmath}, tikzit category=misc]
\tikzstyle{small black dot}=[fill=black, draw=black, shape=circle, inner sep=0pt, minimum width=1.2mm, tikzit category=circuit]
\tikzstyle{cnot ctrl}=[fill=black, draw=black, shape=circle, inner sep=0pt, minimum width=1.2mm, tikzit category=circuit]
\tikzstyle{cnot targ}=[fill=white, draw=white, shape=circle, tikzit category=circuit, label={center:$\oplus$}, inner sep=0pt, minimum width=2.1mm, tikzit fill={rgb,255: red,102; green,204; blue,255}, tikzit draw=black]
\tikzstyle{ket}=[fill=white, draw=black, shape=regular polygon, regular polygon sides=3, regular polygon rotate=-30, scale=0.7, inner sep=1pt, tikzit category=circuit, tikzit shape=rectangle, tikzit fill=green]
\tikzstyle{bra}=[fill=white, draw=black, shape=regular polygon, regular polygon sides=3, regular polygon rotate=30, scale=0.7, inner sep=1pt, tikzit category=circuit, tikzit shape=rectangle, tikzit fill=red]
\tikzstyle{scalar}=[shape=rectangle, text height=1.5ex, text depth=0.25ex, yshift=0.5mm, fill=white, draw=black, minimum height=5mm, yshift=-0.5mm, minimum width=5mm, font={\small}]
\tikzstyle{clabel}=[fill=white, draw=none, shape=rectangle, tikzit fill={rgb,255: red,56; green,255; blue,242}, font={\footnotesize}, inner sep=1pt, tikzit category=labels]
\tikzstyle{empty diagram}=[draw={gray!40!white}, dashed, shape=rectangle, minimum width=1cm, minimum height=1cm, tikzit category=misc]
\tikzstyle{amap}=[fill=white, draw=black, shape=NEbox, tikzit category=asymmetric, tikzit fill=yellow, tikzit shape=rectangle]
\tikzstyle{amap conj}=[fill=white, draw=black, shape=NWbox, tikzit category=asymmetric, tikzit fill=green, tikzit shape=rectangle]
\tikzstyle{amap adj}=[fill=white, draw=black, shape=SEbox, tikzit category=asymmetric, tikzit fill=red, tikzit shape=rectangle]
\tikzstyle{amap trans}=[fill=white, draw=black, shape=SWbox, tikzit category=asymmetric, tikzit fill=orange, tikzit shape=rectangle]
\tikzstyle{astate}=[fill=white, draw=black, shape=NEtriangle, tikzit category=asymmetric, tikzit shape=circle, tikzit fill=yellow]
\tikzstyle{astate conj}=[fill=white, draw=black, shape=NWtriangle, tikzit category=asymmetric, tikzit shape=circle, tikzit fill=green]
\tikzstyle{astate adj}=[fill=white, draw=black, shape=SEtriangle, tikzit category=asymmetric, tikzit shape=circle, tikzit fill=red]
\tikzstyle{astate trans}=[fill=white, draw=black, shape=SWtriangle, tikzit category=asymmetric, tikzit shape=circle, tikzit fill=orange]
\tikzstyle{box}=[shape=rectangle, text height=1.5ex, text depth=0.25ex, yshift=0.5mm, fill=white, draw=black, minimum height=5mm, yshift=-0.5mm, minimum width=5mm, font={\small}]
\tikzstyle{medium box}=[shape=rectangle, text height=1.5ex, text depth=0.25ex, yshift=0.5mm, fill=white, draw=black, minimum height=10mm, yshift=-0.5mm, minimum width=5mm, font={\small}]
\tikzstyle{simple}=[-]
\tikzstyle{hadamard edge}=[-, dashed, dash pattern=on 2pt off 0.5pt, thick, draw={rgb,255: red,68; green,136; blue,255}]
\tikzstyle{box edge}=[-, dashed, dash pattern=on 2pt off 0.5pt, thick, draw={rgb,255: red,203; green,192; blue,225}]
\tikzstyle{brace edge}=[-, tikzit draw=blue, decorate, decoration={brace,amplitude=1mm,raise=-1mm}]
\tikzstyle{diredge}=[->]
\tikzstyle{double edge}=[-, double, shorten <=-1mm, shorten >=-1mm, double distance=2pt]
\tikzstyle{gray edge}=[-, {gray!60!white}]
\tikzstyle{pointer edge}=[->, very thick, gray]
\tikzstyle{boldedge}=[-, line width=1.2pt, shorten <=-0.17mm, shorten >=-0.17mm]
\tikzstyle{bidir edge}=[<->, very thick, draw={rgb,255: red,191; green,191; blue,191}]
\tikzstyle{surface X}=[-, tikzit fill=red, fill=zxred]
\tikzstyle{surface Z}=[-, tikzit fill=green, fill=zxgreen]
\theoremstyle{definition}
\newtheorem{theorem}{Theorem}[section]
\newtheorem{lemma}[theorem]{Lemma}
\newtheorem{definition}[theorem]{Definition}
\newtheorem{example}[theorem]{Example}
\newtheorem{example*}[theorem]{Example*}
\newtheorem{examples*}[theorem]{Examples*}
\newtheorem{remark}[theorem]{Remark}
\newtheorem{remark*}[theorem]{Remark*}
\renewcommand{\>}{\rangle}
\newcommand{\<}{\langle}
\newcommand{\zsp}[3]{Z{}_{#1}^{#2}[\textstyle{#3}]}
\newcommand{\xsp}[3]{X{}_{#1}^{#2}[\textstyle{#3}]}
\title{Phase-free ZX diagrams are CSS codes \\
(...or how to graphically grok the surface code)}
\author{Aleks Kissinger}
\begin{document}

\maketitle

\begin{abstract}
  In this paper, we demonstrate a direct correspondence between phase-free ZX diagrams, a graphical notation for representing and manipulating a certain class of linear maps on qubits, and Calderbank-Shor-Steane (CSS) codes, a large family of quantum error correcting codes constructed from classical codes, including for example the Steane code, surface codes, and colour codes. The stabilisers of a CSS code have an especially nice structure arising from a pair of orthogonal $\mathbb F_2$-linear subspaces, or in the case of maximal CSS codes, a single subspace and its orthocomplement. On the other hand, phase-free ZX diagrams can always be efficiently reduced to a normal form given by the basis elements of an $\mathbb F_2$-linear subspace. Here, we will show that these two ways of describing a quantum state by an $\mathbb F_2$-linear subspace $S$ are in fact the same. Namely, the maximal CSS code generated by $S$ fixes the quantum state whose ZX normal form is also given by $S$.

  This insight gives us an immediate translation from stabilisers of a maximal CSS code into a ZX diagram describing its associated state. We show that we can extend this translation to stabilisers and logical operators of any (possibly non-maximal) CSS code by ``bending wires''. To demonstrate the utility of this translation, we give a simple picture of the surface code and a fully graphical derivation of the action of physical lattice surgery operations on the space of logical qubits, completing the ZX presentation of lattice surgery initiated by de Beudrap and Horsman.
\end{abstract}

\section{Introduction}

The \textit{ZX calculus}~\cite{CD1} is a handy tool for doing calculations involving qubit states and maps. It consists of a graph-like extension to quantum circuits called \textit{ZX diagrams}, which can be used to describe any linear map over a number of qubits, and a collection of graphical rewrite rules used to transform and simplify ZX diagrams. Rather than being suitable for one particular purpose, the ZX calculus has shown itself to be a bit of a Swiss-army-knife for quantum computation, having been applied for example in quantum circuit synthesis and optimisation~\cite{cliffsimp,kissinger2019tcount,deBeaudrap2020Tcount}, measurement-based quantum computation~\cite{duncan2010rewriting,kissinger2017MBQC,Backens2020extraction,Simmons2021Measurement}, classical simulation~\cite{kissinger2021simulating,kissinger2022classical}, mixed-state quantum computations~\cite{carette_completeness_2019}, quantum natural language processing~\cite{coecke2020foundations}, and quantum error correction~\cite{horsman2011quantum,magicFactories,chancellor2016coherent,horsman2017surgery}. 

Another widespread multi-tool in quantum computing is stabiliser theory~\cite{gottesman1998heisenberg}, which is ubiquitous in classical simulation of quantum computations~\cite{aaronson2004improved} and quantum error correction (see e.g.~\cite{gaitan2008quantum,devitt2013quantum,campbell2017roads,roffe2019quantum} for surveys), and it often has at least a walk-on part to play in quantum circuit synthesis~\cite{bravyi2020sixqubitclifford,bravyi2021clifforddepth}, quantum state tomography~\cite{kueng2015stabilizerdesigns}, measurement-based quantum computing~\cite{raussendorf2001mbqc}, and many other areas. Stabiliser theory is concerned with the study of \textit{stabiliser codes}, i.e. subspaces of  $n$-qubit space whose states are the joint $+1$ eigenstates of some commutative subgroup of the Pauli group, called the stabiliser group. State preparation, unitary evolution, and measurement are then all described as efficient operations involving the generators of the stabiliser group.

Since there seems to be a substantial overlap in application areas, it is worth asking how the ZX calculus and stabiliser theory are related. Indeed this was one of the open questions at the end of (the long version of) the original ZX calculus paper~\cite{CD2}:
\begin{quote}
  ``What is the precise connection between the stabilizer formalism and our graphical
reasoning scheme? (Given that both are adequate tools for studying measurement-based
quantum computing.) Is the stabilizer formalism and quantum error correction fully
captured by the ZX-calculus?''
\end{quote}

One answer to this question was given by Backens in 2013~\cite{BackensCompleteness}, where the author showed the ZX calculus was \textit{complete} for Clifford ZX diagrams, i.e. that the rules of the ZX calculus suffice to prove any true equality between the restricted class of ZX diagrams corresponding to stabiliser states (i.e. maximal stabiliser codes). In that sense, it is just as powerful as stabiliser theory. Backens' proof furthermore gives a means of translating between the ZX representation of a stabiliser state and its stabiliser generators, via reduction to the form of a graph state with local Cliffords (GSLC). So, does this mean we have ``fully captured'' stabiliser theory with the ZX calculus?

Not yet! The point of translating concepts to the ZX calculus is to give a graphical representation for complex structures involving many qubits that is both intuitive for visualisation and fully rigorous. Producing such representations is key to picturing the structures and symmetries in a system and how they can be manipulated. In other words, it is key to ``thinking in ZX''. The problem with the translation via GSLC form is this can drastically change the generators of a stabiliser group. From a purely group-theoretic standpoint, this is irrelevant, but often the generators of a stabiliser group hold some extra structure one wants to preserve. For example, topological error correcting codes such as surface and colour codes give recipes for deriving stabiliser generators that are local by design, i.e. each stabiliser acts non-trivially only on a collection of neighbouring qubits in a fixed lattice. This locality is crucial to understanding the structure of these codes, deriving fault-tolerant operations, and implementing them practically on limited hardware. By wrecking this locality, we cannot hope to see the structure of these codes arising in the ZX picture.
Furthermore, stabiliser states are just one small part of stabiliser theory and its application fault-tolerant computation. To understand the latter, one needs to handle generic (i.e. non-maximal) stabiliser codes and operations involving the associated logical operators in a systematic way. While some initial steps have been taken toward building a ``Rosetta stone'' relating ZX diagrams to stabiliser generators~\cite{borghans2019masters} and concepts in the surface code~\cite{gidneyrosettatalk}, there has been little work in the rigorous formulation of general stabiliser codes, logical operators, and the relationship between computations at the physical and logical level in fully diagrammatic terms.



This paper takes a step in this direction by first showing that the simplest family of ZX diagrams, the \textit{phase-free ZX diagrams}, corresponds exactly to the simplest family of stabiliser codes, the \textit{Calderbank-Shor-Steane (CSS) codes}~\cite{calderbank1996good,steane1996error}. Here ``corresponds exactly'' means not only that phase-free ZX diagrams are in 1-to-1 correspondence with maximal CSS codes, but also that one can use the stabiliser generators to construct a ZX diagram directly and conversely from a certain family of phase-free ZX diagrams one can easily read off the stabiliser generators. In fact, there are exactly two ways to do this: one for the X generators of a CSS code and one for the Z generators. For example, the GHZ state is a maximal CSS code with stabilisers $\{ X \!\otimes\! X \!\otimes\! X,\  I \!\otimes\! Z \!\otimes\! Z,\  Z \!\otimes\! Z \!\otimes\! I \}$.The can be represented canonically as one of two equivalent ZX diagrams:
\[
  \begin{array}{ccc}
    \textbf{X-representation} & & \textbf{Z-representation} \\[3mm]
    \{ X \!\otimes\! X \!\otimes\! X \} \quad \leadsto \quad \tikzfig{ghz-x}
    & \qquad\qquad\qquad\qquad &
    \{ Z \!\otimes\! Z \!\otimes\! I,\  I \!\otimes\! Z \!\otimes\! Z \} \quad \leadsto \quad \tikzfig{ghz-z}
  \end{array}
\]
For the X-representation, the green (lighter) nodes correspond to generators, whereas edges to red (darker) nodes show where the generators act non-trivially. The Z-representation is the same, but with the colours swapped. Interestingly, we only need to represent X or Z stabilisers in the ZX diagram, as the rest are uniquely fixed.

The reason this works is that phase-free ZX diagrams and CSS codes are both underpinned by the same $\mathbb F_2$-linear structure, for the two-element field $\mathbb F_2$. The phase-free ZX calculus, which in the categorical algebra literature is sometimes known as $\mathbb{IB}$ for ``interacting bialgebras'', has the property that any diagram describing the state on $n$ systems is uniquely fixed by a linear subspace of $\mathbb F_2^n$~\cite{bonchi2014interacting}. Furthermore, any phase-free ZX diagram can be efficiently reduced to one two equivalent normal forms given by a basis of its associated subspace $S \subseteq F_2^n$ or a basis of the orthocomplement $S^\perp$.
Interestingly, CSS codes are \textit{also} defined by $\mathbb F_2$-linear subspaces. Namely, starting with any linear subspace $S \subseteq F_2^n$, one can derive the generators of a maximal CSS code from the bases of $S$ and its orthocomplement $S^\perp$. The main result of this paper is that these spaces are in fact one in the same. Hence, ZX normal forms and generators of a maximal CSS code are just two different ways of writing down the basis of an $\mathbb F_2$-linear subspace.

Once this is established for maximal CSS codes, one can promote this to a result for non-maximal CSS codes equipped with a set of logical operators by appealing to the concept of map-state duality. That is, one can treat a CSS code with $n-k$ stabilisers and $2k$ logical operators either as a map from $k$ logical qubits into $n$ physical qubits or as a maximal CSS code on $n+k$ qubits just by ``bending wires''. This gives us a beautiful picture of the surface code (a particularly prominent CSS code) as a diagram that has the same shape as the usual ``tile'' pictures one is accustomed to seeing in the literature. In particular, this enables us to give a new, fully-graphical derivation of the basic operations of \textit{lattice surgery}. This is perhaps the most popular technique for implementing multi-qubit operations in the surface code, but the explanation for how the operations on physical qubits yield the corresponding operations on logical qubits is only roughly sketched in the original paper~\cite{horsman2012latticesurgery} and the later translation into ZX~\cite{horsman2017surgery}, relying on some expertise to stabiliser theory on the part of the reader to fill in the details. Here, we formalise the precise correspondence between the logical and physical layers of lattice surgery in terms of commutation through the embedding map, and give a fully rigorous proof that the physical operations indeed work as advertised on the logical layer.

After giving some preliminaries in Section~\ref{sec:prelim}, we provide a compact presentation of the phase-free ZX calculus in Section~\ref{sec:pp-complete} and show how this enables us to reduce to normal forms and work with the underlying $\mathbb F_2$-linear structure diagrammatically. An immediate corollary of this is completeness for the phase-free ZX calculus, which is proven directly in Theorem~\ref{thm:completeness-pp-zx}. In Section~\ref{sec:pauli}, this is extended to the Pauli ZX calculus, which admits a diagrammatic presentation of the stabilisers of a CSS code and the corresponding projections onto $+1$ and $-1$ eigenspaces arising from stabiliser measurements. Section~\ref{sec:max-css} gives the main technical result of the paper, Theorem~\ref{thm:max-css}, enabling direct two-way translations between phase-free ZX diagrams and maximal CSS codes. This is extended in Section~\ref{sec:css-encode} to show a direct translation from non-maximal CSS codes with logical operators to a ZX diagram giving the embedding of logical qubits into physical ones. Section~\ref{sec:surface-code} consists of an extended example, illustrating the translation of the surface code into a ZX diagram and giving graphical derivations of the lattice surgery operations.

\paragraph{Acknowledgements.} The author would like to thank John van de Wetering, Dominic Horsman, Niel de Beudrap, and Craig Gidney for useful discussions about the ZX calculus, surface codes, and lattice surgery.


\section{Preliminaries}\label{sec:prelim}

\subsection{ZX diagrams}

The basic building blocks of ZX diagrams are spiders, which come in two varieties, \textit{Z spiders} and \textit{X spiders}, defined respectively relative the eigenbases of the Pauli Z and Pauli X operators.
\begin{equation}\label{eq:spiders}
  \begin{array}{ccccl}
    \zsp{m}{n}{\alpha}
    & \ := \ &
    \textrm{\scriptsize $m$}\!\left\{\ \ \tikzfig{z-phase-spider}\ \ \right\}\!\textrm{\scriptsize $n$}
    & \ = \ &
    \ket{0}^{\otimes n}\bra{0}^{\otimes m} + e^{i\alpha}\ket{1}^{\otimes n}\bra{1}^{\otimes m} \\[5mm]
    \xsp{m}{n}{\alpha}
    & \ := \ &
    \textrm{\scriptsize $m$}\!\left\{\ \ \tikzfig{x-phase-spider}\ \ \right\}\!\textrm{\scriptsize $n$}
    & \ = \ &
    \ket{{+}}^{\otimes n}\bra{{+}}^{\otimes m} +
    e^{i\alpha}\ket{{-}}^{\otimes n}\bra{{-}}^{\otimes m}
  \end{array}
\end{equation}
where $\<\psi|^{\otimes m}$ and $|\psi\>^{\otimes n}$ are the $m$- and $n$-fold tensor products of bras and kets, respectively, and we take the convention that $(...)^{\otimes 0} = 1$. The parameter $\alpha$ is called the \textit{phase} of a spider. If we omit the phase, it is assumed to be $0$:
\[
  \tikzfig{z-spider} \ \ :=\ \  \tikzfig{z-phase-spider0}
  \qquad\qquad
  \tikzfig{x-spider} \ \ :=\ \  \tikzfig{x-phase-spider0}
\]
We can also express the phase-free X spider as a sum over all of the even-parity Z basis elements:
\begin{equation}\label{eq:x-spider-parity}
  \tikzfig{x-spider} \ \ =\ \  \left( \frac{1}{\sqrt 2} \right)^{m+n} \sum_{j_1 \oplus \ldots \oplus j_{m+n} = 0} |j_1 \ldots j_n\>\<j_{n+1} \ldots j_{m+n}|
\end{equation}
This will be particularly important when we relate phase-free ZX diagrams to $\mathbb F_2$-linear structure in Section~\ref{sec:pp-complete}, as $\oplus$ is just addition in $\mathbb F_2$.

Like gates in a circuit, spiders are combined via composition (i.e. matrix multiplication) and tensor product, depicted graphically as plugging diagrams together or stacking them side-by-side. For example:
\[
  \tikzfig{zx-circuit} \,\ =\ \,
  \tikzfig{zx-circuit-delim}\,\ :=\,
  \bigg(\xsp{3}{2}{\beta} \otimes \zsp{1}{1}{\frac \pi 2}\bigg)
  \bigg(\zsp{1}{2}{0} \otimes \xsp{3}{2}{\alpha}\bigg)
\]
In addition to spiders, we allow identity wires, swaps, cups, and caps in ZX diagrams, which are defined as follows:
\[
  \tikz[tikzfig]{ \draw (0,0) -- (3,0); } \ \ :=\ \  \sum_i |i\>\<i|\qquad
  \tikzfig{swap} \ \ :=\ \ \sum_{ij} |ij\>\<ji| \qquad
  \tikzfig{cup} \ \ := \ \ \sum_i |ii\> \qquad
  \tikzfig{cap} \ \ := \ \ \sum_i \<ii|
\]


Swaps allow us to interpret wire crossings, whereas cups and caps allow us to ``bend'' inputs around to become outputs and vice-versa. Concretely, this corresponds to performing the partial transpose in the computational basis, and in particular witnesses the bijection between linear maps $L: (\mathbb C^2)^{\otimes m} \to (\mathbb C^2)^{\otimes n}$ and states $|L\> \in (\mathbb C^2)^{\otimes (m+n)}$:
\begin{equation}\label{eq:map-state-dual}
  \tikzfig{map-L}
  \qquad
  \mapsto
  \qquad
  |L\> \ :=\ 
  \tikzfig{ket-L}
\end{equation}
This bijection is sometimes called \textit{map-state duality} and is essentially the same mechanism underlying the well-known \textit{Choi-Jamio\l{}kowski isomorphism}.

Spiders of either colour are invariant under swapping wires and applying cups/caps to interchange inputs and outputs, enabling us to treat ZX diagrams essentially as labelled undirected graphs, provided we keep track of ordering of inputs and outputs. For example, the following diagrams all describe the same linear map:
\begin{equation}\label{eq:deformation-example}
\tikzfig{zx-diagram-example} \ =\ 
\tikzfig{zx-diagram-deform1} \ =\ 
\tikzfig{zx-diagram-deform2}
\end{equation}

If we allow phases to be arbitrary, ZX diagrams are \textit{universal} in the sense that any linear map from $m$ qubits to $n$ qubits can be constructed as a ZX diagram. In particular, they can readily encode universal sets of quantum gates, such as the following:
\[
  \textit{CNOT} := \tikzfig{cnot-asym} \qquad
  \sqrt{X} := \tikzfig{rt-x} \qquad
  Z[\alpha] := \tikzfig{za}
\]
Furthermore, there exists a collection of rules between ZX diagrams, called the \textit{ZX calculus}, which is convenient for performing calculations. Recent presentations of the ZX calculus are even \textit{complete} in the sense that any two diagrams describing the same linear maps can be transformed into the same diagram using a fixed set of graphical rules~\cite{BackensCompleteness,SimonCompleteness,HarnyAmarCompleteness}.

However, if we allow the full expressiveness of ZX diagrams, then deciding if two ZX diagrams describe the same linear map is at least as hard as proving two arbitrary quantum circuits describe the same unitary. As the latter is QMA-complete, this will almost certainly require an exponential amount of work in the general case. Hence, it is interesting to focus on certain families of ZX diagrams where equality checking, and related tasks, can be performed efficiently using diagram rewriting. For our purposes, we will be interested in ZX diagrams whose spiders either all have phase $0$ (Section~\ref{sec:pp-complete}) or phases that are multiples of $\pi$ (Section~\ref{sec:pauli}).

\subsection{CSS codes}\label{sec:css-codes}

Calderbank-Shor-Steane codes, or CSS codes, are a special case of \textit{stabiliser codes}. The latter are certain subspaces of $(\mathbb C^2)^{\otimes n}$ defined as the joint $+1$ eigenspace of a \textit{stabiliser group}. These subspaces are called ``codes'' due to their relationship with quantum error correction, though they need not be explicitly connected to any particular quantum error correction scheme. We will use the term stabiliser code and the more neutral \textit{stabiliser subspace} interchangeably.

Let $\mathcal P_n$ be the Pauli group in $n$ qubits defined as:
\[
  \mathcal{P}_n := \{ i^k P_1 \otimes \ldots \otimes P_n \ |\ k \in \{0,1,2,3\}, P_j \in \{ I, X, Y, Z \} \}
\]

\begin{definition}\label{def:stab-group}
  A \textit{stabiliser group} is a subgroup $\mathcal S \subseteq \mathcal P_n$ such that $-I \notin \mathcal S$. Its associated \textit{stabiliser subspace} (a.k.a. stabiliser code) is:
  \[
    \textrm{Stab}(\mathcal S) := \bigg\{ |\psi\> \ \bigg|\ \forall M\in \mathcal S \ .\  M|\psi\> = |\psi\> \bigg\}
  \]
\end{definition}

The condition that $-I \notin \mathcal S$ ensures that the space $\textrm{Stab}(\mathcal S)$ is non-trivial, and it implies in particular that $\mathcal S$ is commutative and all of its elements are self-inverse. One of the most important properties of stabiliser groups and subspaces is the following, which could be called the \textit{Fundamental Theorem of Stabiliser Theory}.

\begin{theorem}[FTST]\label{thm:ftst}
  If $\mathcal S \subseteq \mathcal P_n$ has $m$ generators, then $\textrm{Stab}(\mathcal S)$ is a $2^{n-m}$ dimensional subspace of $(\mathbb C^2)^{\otimes n}$.
\end{theorem}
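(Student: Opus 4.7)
The plan is to compute the dimension of $\textrm{Stab}(\mathcal S)$ by expressing the orthogonal projector onto that subspace explicitly in terms of the generators and taking its trace. I would proceed in three stages.

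First, I would establish that every element $g \in \mathcal S$ is a Hermitian involution with eigenvalues $\pm 1$. For any Pauli $g = i^k P_1 \otimes \cdots \otimes P_n$ with $P_j \in \{I,X,Y,Z\}$ one computes $g^2 = (-1)^k I$, so if $g^2 \neq I$ then $-I = g^2 \in \mathcal S$, contradicting the assumption $-I \notin \mathcal S$. Hence $g^2 = I$, so $k$ is even and $g$ is (up to an overall sign) a tensor product of $I,X,Y,Z$; in particular $g$ is Hermitian with spectrum $\{\pm 1\}$. Commutativity of $\mathcal S$ follows from the same trick: any two Paulis either commute or anticommute, and if $g,h \in \mathcal S$ anticommuted then $(gh)^2 = -I \in \mathcal S$, again a contradiction.

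Next, take independent generators $g_1, \ldots, g_m$ of $\mathcal S$. Because the $g_i$ commute and satisfy $g_i^2 = I$, the operator
\[
  P \ :=\ \prod_{i=1}^{m} \frac{1}{2}(I + g_i)
\]
is the orthogonal projector onto their joint $+1$ eigenspace, which is exactly $\textrm{Stab}(\mathcal S)$. Expanding the product gives
\[
  P \ =\ \frac{1}{2^m}\sum_{(a_1,\ldots,a_m) \in \{0,1\}^m} g_1^{a_1}\cdots g_m^{a_m}\ =\ \frac{1}{2^m}\sum_{S \in \mathcal S} S,
\]
where the last equality uses independence of the $g_i$ together with $-I \notin \mathcal S$ to ensure the $2^m$ products $g_1^{a_1}\cdots g_m^{a_m}$ are all distinct and enumerate $\mathcal S$ exactly.

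Finally, I would compute $\dim \textrm{Stab}(\mathcal S) = \mathrm{tr}(P)$. Since $\mathrm{tr}$ of any non-identity tensor of Paulis is zero while $\mathrm{tr}(I) = 2^n$, and since $-I \notin \mathcal S$ guarantees that $I$ is the unique element of $\mathcal S$ with nonzero trace, we get
\[
  \dim \textrm{Stab}(\mathcal S) \ =\ \frac{1}{2^m}\sum_{S \in \mathcal S}\mathrm{tr}(S) \ =\ \frac{2^n}{2^m} \ =\ 2^{n-m}.
\]

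The one delicate point — and the place I would expect to spend the most care — is the independence of generators and the identification $|\mathcal S| = 2^m$. The statement ``$\mathcal S$ has $m$ generators'' should be read as $m$ being the size of a minimal generating set; otherwise one could inflate $m$ artificially. Independence, together with $-I \notin \mathcal S$, precisely rules out any nontrivial relation $g_1^{a_1}\cdots g_m^{a_m} = \pm I$, which is exactly what is needed to make the expansion of $P$ above sum over distinct group elements without double-counting. Everything else is a routine trace computation.
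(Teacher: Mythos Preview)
Your argument is correct and is the standard textbook proof of this fact (projector expansion plus trace). The paper, however, does not prove this theorem at all: it is stated as background (the ``Fundamental Theorem of Stabiliser Theory'') and used without further justification. So there is nothing to compare against; your proof simply supplies what the paper takes for granted.
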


In particular, a stabiliser group with $n$ generators fixes a $2^0 = 1$ dimensional subspace, i.e. a quantum state, up to scalar factor. This is called a \textit{maximal stabiliser group}. More generally, we think of non-maximal stabiliser groups as a description for the embedding of $k := n-m$ ``logical'' qubits into a space of $n$ ``physical'' qubits.

One of the simplest ways to construct a stabiliser group is from a pair of orthogonal subspaces of $\mathbb F_2^n$, for $\mathbb F_2$ the two-element field. We say two subspaces $S, T$ are orthogonal if $v^T w = 0$ for all $v \in S, w \in T$ and we let $S^\perp := \{ w \,|\, \forall v \in S . v^T w = 0 \}$.

\begin{definition}
  The \textit{CSS code} generated by orthogonal subspaces $S, T \subseteq \mathbb{F}_2^n$ is a stabiliser code whose generators are of the following form:
  \[
    \mathcal X_i := \bigotimes_{k=1}^{n} X^{(v_i)_k}
    \qquad\qquad
    \mathcal Z_j := \bigotimes_{k=1}^{n} Z^{(w_j)_k}
  \]
  where $\{ v_1, \ldots, v_p \}$ is a basis spanning $S$ and $\{ w_1, \ldots, w_q \}$ a basis spanning $T$. A CSS code is called \textit{maximal} if $T = S^\perp$.
\end{definition}

That is, we let the basis vectors of $S$ define the X generators and we let the basis vectors of $T$ define the Z generators. Since addition in $\mathbb F_2$ is take modulo-2, orthogonality guarantees that each X generator overlaps with a Z generator in an even number of places, which makes the group commutative. Using this fact, it is easy to verify the resulting group is a stabiliser group.

\begin{example}\label{ex:steane}
  Let $S$ be a 3D subspace of $\mathbb F_2^7$ spanned by $\{ (1, 0, 0, 0, 1, 1, 1), (0, 1, 0, 1, 0, 1, 1), (0, 0, 1, 1, 1, 0, 1) \}$. This particular subspace is known as a \textit{Hamming code}, and has the nice property that $S$ is orthogonal to itself. Hence, we can use $S$ to derive both the X and Z generators of a CSS code called the \textit{Steane code}:
\[
  \small
  \begin{aligned}
    \mathcal X_1 & := X \!\otimes\! I \!\otimes\! I \!\otimes\! I \!\otimes\! X \!\otimes\! X \!\otimes\! X &
    \mathcal X_2 & := I \!\otimes\! X \!\otimes\! I \!\otimes\! X \!\otimes\! I \!\otimes\! X \!\otimes\! X &
    \mathcal X_3 & := I \!\otimes\! I \!\otimes\! X \!\otimes\! X \!\otimes\! X \!\otimes\! I \!\otimes\! X \\
    \mathcal Z_1 & := Z \!\otimes\! I \!\otimes\! I \!\otimes\! I \!\otimes\! Z \!\otimes\! Z \!\otimes\! Z &
    \mathcal Z_2 & := I \!\otimes\! Z \!\otimes\! I \!\otimes\! Z \!\otimes\! I \!\otimes\! Z \!\otimes\! Z &
    \mathcal Z_3 & := I \!\otimes\! I \!\otimes\! Z \!\otimes\! Z \!\otimes\! Z \!\otimes\! I \!\otimes\! Z \\
  \end{aligned}
\]
\end{example}

\section{Normal forms for phase-free ZX diagrams}\label{sec:pp-complete}

\begin{figure}
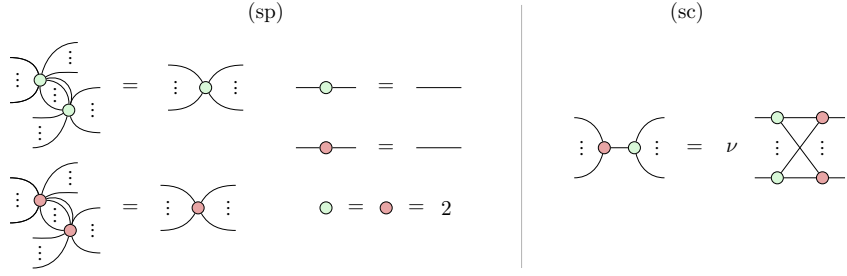

  \centering
  \tikzfig{zx-phase-free2}
  \caption{\label{fig:pp-zx} The rules of the phase-free ZX calculus: the spider rules (sp) and strong complementarity (sc). Note the righthand-side of the (sc) rule is a complete bipartite graph of $m$ Z spiders and $n$ X spiders, with a normalisation factor $\nu := 2^{(m-1)(n-1)/2}$, which we typically drop when scalar factors are irrelevant.}
\end{figure}

A \textit{phase-free ZX diagram} is a diagram whose spiders all have phase $\alpha = 0$, modulo the rules of the \textit{phase-free ZX calculus}, as shown in Figure~\ref{fig:pp-zx}. Implicit in these rules is the property that spiders of either colour are invariant under permutation of wires and interchanging inputs/outputs. In particular, this implies that the ``cups'' and ``caps'' associated to the two kinds of spiders coincide:
\[
  \tikzfig{cup} = \tikzfig{cup-z} = \tikzfig{cup-x}
  \qquad\qquad
  \tikzfig{cap} = \tikzfig{cap-z} = \tikzfig{cap-x}
\]

Despite its simplicity, we can now show that the phase-free ZX calculus is complete. Note this ``folklore'' result is not implied by the Clifford completeness~\cite{BackensCompleteness} or the Clifford+T/universal completeness theorems~\cite{jeandel2018complete,HarnyCompleteness} because those all rely on larger rulesets than those given in Figure~\ref{fig:pp-zx}. A similar result was first proven in \cite{bonchi2014interacting} for the equational theory $\mathbb{IB}$, which up to scalar factors, is equivalent to the phase-free ZX calculus. However, their technique used some category-theoretic machinery and proved completeness with respect to a slightly different (non-quantum) interpretation of the diagrams. Hence, it is instructive to prove the result directly using ZX rules for the case of linear maps between qubits. We begin by noting that the following \textit{complementarity} rule follows routinely from the rules in Figure~\ref{fig:pp-zx} and the fact that the Z and X cups/caps coincide (see e.g.~\cite{zxworking}, Section 4.5):
\ctikzfig{comp}
This rule enables us to delete pairs of parallel edges between spiders of opposite colours, which we will use along with the spider rules (sp) from Figure~\ref{fig:pp-zx} to remove all parallel edges in the proof of Lemma~\ref{lem:nf}.

Because of map-state duality~\eqref{eq:map-state-dual}, it will suffice to restrict our attention to diagrams that only have outputs, i.e. those describing quantum states on $n$ qubits. Note that we call a spider \textit{interior} if it is not connected directly to an input or an output.


\begin{definition}\label{def:XZ-nf}
  The \textit{Z-X normal form} of a phase-free ZX diagram consists of a row of interior Z spiders connected to a row of X spiders each connected to precisely 1 output. It can be described as a set of vectors $\{ v_1, \ldots, v_p \}$ over $\mathbb F_2$ where $(v_i)_j = 1$ if and only if the $i$-th interior Z spider is connected to the $j$-th X spider.
   An \textit{X-Z normal form} is described analogously, with the roles of Z and X spiders interchanged.
\end{definition}

From the Z-X normal form, we can conveniently express the associated quantum state, up to a normalisation factor $N$, as a sum over the space spanned by $\{ v_1, \ldots, v_k \}$:
\begin{equation}\label{eq:span-form}
  \tikzfig{phase-free-state-sm} \ \ =\ \ 
  N \sum_{b \in S} |b\>
  \qquad
  \textrm{where}
  \qquad
  S = \textrm{Span}\{ v_1, \ldots, v_p \}
\end{equation}
This can be seen by direct calculation, using the ``even parity'' form of the X spider given in equation~\eqref{eq:x-spider-parity}. The X-Z normal form can be written similarly, but with the vectors in the normal form spanning $S^\perp$ rather than $S$ itself:
\begin{equation}\label{eq:cospan-form}
  \ \ \tikzfig{phase-free-perp-sm} \ \ =\ \ 
  N \sum_{b \in S} |b\>
  \qquad
  \textrm{where}
  \qquad
  S^\perp = \textrm{Span}\{ w_1, \ldots, w_q \}
\end{equation}
Hence, the ZX diagram can be seen as giving a set of linear equations defining $S$, rather than a spanning set for $S$.

\begin{lemma}\label{lem:nf}
  Any phase-free ZX diagram can be rewritten to Z-X or X-Z normal form.
\end{lemma}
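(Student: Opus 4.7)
The plan is to proceed in three stages. By map-state duality~\eqref{eq:map-state-dual}, it suffices to treat diagrams with only outputs: we can bend each input into an output using a cup, reduce the resulting state to Z-X normal form, and then bend the extra wires back up into inputs. The bending caps can be absorbed into the outermost X-spider layer of the normal form because the Z and X caps coincide in the phase-free setting. The X-Z case follows from the Z-X case by the colour-swap symmetry of the rules in Figure~\ref{fig:pp-zx}, so I focus on producing a Z-X normal form for an arbitrary phase-free state on $n$ qubits.

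First I would apply spider fusion (sp) exhaustively, merging every maximal connected monochromatic subgraph into a single spider, and then apply the derived complementarity rule noted above to eliminate pairs of parallel edges between opposite-colour spiders. The result is a simple bipartite graph of Z and X spiders, any two of which share at most one edge, with outputs attached at various spiders. Next I would normalise the output layer: insert an X identity (a 2-legged X spider, which acts as a wire by (sp)) on any output currently emerging from a Z spider, so that every output exits through an X spider; then, by spider unfusion, split each X spider carrying $k > 1$ outputs into $k$ arity-2 copies and use strong complementarity (sc) to distribute its Z-neighbourhood over the copies so each copy carries exactly one output. Finally, any remaining X spider not attached to an output is eliminated by applying (sc) to its star of single edges, producing a complete bipartite configuration whose resulting parallel edges to each surrounding X spider cancel via complementarity and fuse via (sp) into the Z-layer.

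The main obstacle I anticipate is establishing termination: naively, (sc) can re-introduce multi-edges that were previously deleted, so the procedure could in principle loop. I would address this by introducing a lexicographic complexity measure, for instance $(\text{\# interior X spiders},\ \text{\# X spiders with more than one output},\ \text{\# edges})$, and checking that each outlined move, together with an immediate follow-up application of (sp) and complementarity to simplify, strictly decreases this measure. When the procedure halts, the diagram has exactly the shape of Definition~\ref{def:XZ-nf}: a row of interior Z spiders connected to a row of output X spiders, whose edges encode a spanning set $\{v_1,\ldots,v_p\}$ for the subspace $S$ appearing in~\eqref{eq:span-form}. Since every rewrite step preserves the underlying linear map, the resulting Z-X normal form represents the same state as the original diagram.
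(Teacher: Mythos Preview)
Your strategy is essentially the paper's: pass to a simple bipartite form via (sp) and (c), then repeatedly use (sc) together with (sp)/(c) cleanup to remove interior X spiders, and tidy the output layer. The paper orders the last two stages the other way around---first eliminate all interior X spiders, then fix outputs by inserting Z--X identity pairs on the offending output wires---which makes both the output-fixing step and the termination argument simpler: each application of (sc) to a single Z--X edge (with the X spider interior), followed by (sp)/(c), strictly decreases the number of interior X spiders by one, so no lexicographic measure is needed.

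Two places in your write-up are imprecise enough to warrant tightening. First, your description of splitting a multi-output X spider into ``$k$ arity-$2$ copies'' and then using (sc) to ``distribute its Z-neighbourhood'' does not obviously correspond to a valid sequence of rule applications; the clean move (and what the paper does) is simply to introduce a Z-then-X identity pair on extra output wires, which needs no (sc) at all. Second, (sc) applies to a single Z--X edge, not to the whole ``star'' of an interior X spider: you must pick one adjacent Z spider, apply (sc) there, and then let (sp) fuse the resulting new Z spiders into the old Z-neighbours and the resulting new X spiders into the chosen Z spider's former X-neighbours (or leave them as boundary X spiders when that neighbour was an output). Phrased this way, the decrease in interior X spiders is immediate and your proposed lexicographic measure becomes unnecessary.
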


\begin{proof}
  We prove this by adapting the rewrite strategy given in~\cite{SDRT1}, section 7. We show the procedure for reducing to Z-X normal form. X-Z normal form can be done by reversing the roles of Z and X spiders.

  First apply (sc) and (c) rules until the diagram has no parallel wires and no wires between spiders of the same colour. We then iterate the following two steps as long as possible:
  \begin{enumerate}
    \item Apply (sc), subject to the condition that the X spider on the lefthand-side is interior.
    \item Apply the (sp) and (c) rules from left-to-right as much as possible.
  \end{enumerate}
  Before each iteration, we have a diagram with no wires between spiders of the same colour. Hence, applying the (sc) rule in step 1 will only produce new X spiders adjacent to outputs or adjacent to other X spiders. The latter are then merged with existing X spiders in step 1, so each iteration strictly reduces the number of interior X spiders. After step 2, the only interior X spiders left will be connected to at least one X spider, so the procedure will terminate when there are no interior X spiders remaining.

  The diagram is now almost in normal form, but we may have (i) Z spiders connected to one or more outputs and (ii) X spiders connected to two or more outputs. Both of these cases can be tidied up by introducing identity Z and X spiders on output wires as follows:
  \ctikzfig{fixup}
  Now all Z spiders are interior and all X spiders connect to at most 1 output, hence we have obtained the Z-X normal form.
\end{proof}

\begin{lemma}\label{lem:subspace}
  Let $\mathcal B = \{ v_1, \ldots, v_k \}$ and $\mathcal C = \{ w_1, \ldots, w_l \}$ describe two Z-X normal forms for on $n$ qubit space. If the sets $\mathcal B$ and $\mathcal C$ span the same subspace of $\mathbb F_2^n$, one can be transformed into the other.
\end{lemma}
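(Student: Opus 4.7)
The plan is to realise any basis change of $S := \textrm{Span}(\mathcal B) = \textrm{Span}(\mathcal C)$ as a sequence of phase-free ZX rewrites on the Z-X normal form. Any two spanning sets of an $\mathbb F_2$-subspace are related by a finite sequence of (i) permutations, (ii) row additions $v_j \mapsto v_i + v_j$ for $i \neq j$, and (iii) insertions or deletions of zero vectors, so it suffices to implement each of these moves as a local diagrammatic rewrite on the bipartite gadget described in Definition~\ref{def:XZ-nf}.

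Operations (i) and (iii) are essentially immediate. The interior Z spiders carry no intrinsic ordering, so reordering the elements of $\mathcal B$ leaves the diagram unchanged as a labelled graph. A zero row $v_i = 0$ corresponds to an interior Z spider with no incident wires; by the spider rule this evaluates to a nonzero scalar, so such a spider may be freely introduced or removed working up to scalar. Combined, these already let us pad $\mathcal B$ and $\mathcal C$ to a common cardinality.

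The core step is (ii). I would implement $v_j \mapsto v_i + v_j$ as follows: first unfuse $z_i$ via (sp) into two Z spiders connected by a single wire, with the new copy $z_i^{\text{copy}}$ carrying a duplicate of the X-neighborhood $N_i$; then use strong complementarity (sc) through an appropriate X spider to produce a direct Z--Z wire from $z_i^{\text{copy}}$ to $z_j$; fuse $z_i^{\text{copy}}$ into $z_j$ via (sp), giving a Z spider at position $j$ whose neighborhood is the multiset-union $N_i \sqcup N_j$; and finally apply the derived complementarity rule (c) to cancel each doubled edge over $N_i \cap N_j$, yielding the desired neighborhood $N_i \triangle N_j$. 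Iterating this row-addition move lets us transform $\mathcal B$ into $\mathcal C$ by the standard Gauss-elimination argument, using (iii) to drop any vectors that become zero along the way.

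The main obstacle will be step (ii): the (sc)-routing of the auxiliary Z--Z connection must be done without disturbing the rest of the bipartite structure, which requires some care. If that direct syntactic route becomes fiddly, a safer fallback is to define a \emph{reduced} Z-X normal form by fixing an ordering of the X spiders and demanding that the bipartite adjacency matrix be in reduced row echelon form, then show via the same moves (i)--(iii) that every Z-X normal form rewrites to its unique reduced form; since $\mathcal B$ and $\mathcal C$ have identical RREFs whenever they span the same subspace, the lemma follows by confluence to this canonical shape.
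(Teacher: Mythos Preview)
Your high-level reduction is exactly the paper's: both of you observe that permutations of interior Z spiders are free, that a zero vector is a legless Z spider contributing only a scalar, and that the whole lemma therefore boils down to implementing the single row operation $v_j \mapsto v_i + v_j$ diagrammatically. So the strategy is fine.

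The gap is in your implementation of that row operation. Step 1 of your plan (``unfuse $z_i$ via (sp) into two Z spiders \ldots with the new copy carrying a \emph{duplicate} of the X-neighbourhood $N_i$'') is not a valid use of the spider rule: unfusing a spider \emph{partitions} its legs between the two resulting spiders, it never duplicates them. No combination of (sp) and (c) will hand you two Z spiders each simply connected to all of $N_i$, since those rules only change edge multiplicities mod $2$, and $1 \neq 1+1 \pmod 2$. Step 2 is similarly shaky: (sc) rewrites a Z--X incidence to a complete bipartite Z/X pattern, it does not manufacture a Z--Z wire. Your fallback via RREF does not escape the problem, because reaching RREF still requires exactly these row additions.

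The paper's implementation of the row operation avoids any ``copy'' step. It partitions the X spiders into three groups --- those adjacent only to the $v$-spider, only to the $w$-spider, and to both --- and then applies (sc) once in reverse and once forward across this trichotomy. The net effect is that the ``both'' group swaps roles with one of the ``only'' groups, which is precisely the symmetric difference $N_i \triangle N_j$ you were aiming for, without ever needing to duplicate a neighbourhood. If you redo step (ii) along these lines, the rest of your argument goes through unchanged.
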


\begin{proof}
  It suffices to show that we can transform a pair of Z-spiders described by the vectors $v, w$ into Z spiders described by $v, v+w$. That is, we can ``add'' the wires of one spider to the other one, modulo 2. We can see this by splitting the X spiders into 3 groups: those connected to $v$, those connected to $w$, and those connected to both. We can then perform the vector addition by two applications of the (sc) rule. First, we apply the rule in reverse:
  \ctikzfig{row-op1}
  then forward:
  \ctikzfig{row-op2}
  Using this rule, plus the fact that spiders described by a zero vector only contribute a scalar factor, we can transform any spanning set for a linear subspace into any other one.
\end{proof}

\begin{theorem}[Completeness of phase-free ZX] \label{thm:completeness-pp-zx}
  If two phase-free ZX diagrams describe the same linear map, one can be rewritten into the other using only the rules in Figure \ref{fig:pp-zx}.
\end{theorem}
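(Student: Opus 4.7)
The plan is to assemble the theorem directly from the two lemmas already in hand, using map-state duality as a first reduction and the formula \eqref{eq:span-form} as the bridge between diagrams and subspaces.

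First I would use map-state duality \eqref{eq:map-state-dual} to reduce the statement to the case of states: given two phase-free ZX diagrams $D_1, D_2 : m \to n$ denoting the same linear map, bend all inputs down to outputs via cups to obtain diagrams $D_1', D_2'$ on $m+n$ qubits denoting the same state, and note that any rewrite between $D_1'$ and $D_2'$ lifts back by bending the outputs back up (the cups/caps are the same for both colours, so this manoeuvre is internal to the phase-free calculus). So it suffices to prove completeness for diagrams with only outputs.

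Next I would invoke Lemma~\ref{lem:nf} to rewrite both $D_1'$ and $D_2'$ into Z-X normal form, yielding bases $\mathcal{B} = \{v_1, \ldots, v_p\}$ and $\mathcal{C} = \{w_1, \ldots, w_q\}$ for subspaces $S_1, S_2 \subseteq \mathbb{F}_2^{m+n}$. By \eqref{eq:span-form}, each of these diagrams represents, up to a scalar $N$, the quantum state $\sum_{b \in S_i} \ket{b}$. Since the computational basis vectors $\{\ket{b} : b \in \mathbb{F}_2^{m+n}\}$ are linearly independent, equality of the two states forces $S_1 = S_2$ (the scalars match automatically because the number of nonzero coefficients, $|S_i|$, is determined by the state). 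Finally, Lemma~\ref{lem:subspace} converts one spanning set into the other, giving a sequence of rewrites between the two normal forms. Composing the three stages (normalise $D_1'$, transform basis, un-normalise to $D_2'$, then un-bend the cups) yields the desired rewrite from $D_1$ to $D_2$.

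The one delicate point, and what I expect to be the main obstacle, is the bookkeeping of scalar factors: the (sc) rule carries the prefactor $\nu = 2^{(m-1)(n-1)/2}$, and Lemma~\ref{lem:subspace} itself absorbs such scalars (``spiders described by a zero vector only contribute a scalar factor''). To make the completeness theorem literally true, I would need to verify that the scalars accumulated along any two reductions to the same normal form agree whenever the underlying linear maps agree. This can be handled by a separate scalar-tracking argument, showing that the ``empty-diagram'' scalars generated by the rules can themselves be freely manipulated using (sp) and (sc) — essentially a one-qubit completeness check at the level of scalars — or alternatively by stating the theorem up to the nonzero global scalar $N$ and treating that as a harmless normalisation, consistent with the convention adopted earlier in the section.
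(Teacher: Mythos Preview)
Your proposal is correct and follows essentially the same route as the paper: reduce to states via map-state duality, apply Lemma~\ref{lem:nf} to reach Z-X normal form, use \eqref{eq:span-form} to conclude the underlying subspaces coincide, and finish with Lemma~\ref{lem:subspace}. Your extra care about the scalar bookkeeping is well-placed but goes beyond what the paper does, which simply appeals to the convention of dropping the normalisation factors $\nu$ introduced by (sc).
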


\begin{proof}
  Suppose two diagrams describe the same quantum state. Then, they can both be brought into Z-X normal form. By equation~\eqref{eq:span-form}, their associated $\mathbb F_2$-vectors must span the same linear space $S$. Hence, by Lemma~\ref{lem:subspace}, one can be transformed into the other.
\end{proof}

\section{Stabilisers and Pauli ZX diagrams}\label{sec:pauli}

\begin{figure}
  \centering
  \tikzfig{zx-pauli}
  \caption{\label{fig:pauli-zx} The rules of the Pauli ZX calculus. Again, the righthand-side of the (sc) rule is a complete bipartite graph of $m$ Z spiders and $n$ X spiders, this time with a scalar factor $\nu' := (-1)^{jk} 2^{(m-1)(n-1)/2}$.}
\end{figure}

In this section, we will relate phase-free ZX diagrams to maximal CSS codes. For this, we will need to reason about the action of Pauli matrices on ZX diagrams. The Pauli matrices are represented by spiders carrying $\pi$ phases:
\[
  X \ =\ \tikzfig{pauli-x} \qquad\qquad
  Y \ =\ i \ \tikzfig{pauli-xz} \qquad\qquad
  Z \ =\ \tikzfig{pauli-z}
\]
Hence it will be convenient to use the \textit{Pauli ZX calculus} (Fig.~\ref{fig:pauli-zx}). The rules of the Pauli ZX calculus are the same as the phase-free ones from Figure~\ref{fig:pp-zx}, only generalised to account for the fact that phases on spiders may be integer multiples of $\pi$.

\begin{remark}
  While it won't be relevant for the current presentation, it is worth noting that the Pauli ZX calculus is also complete, and completeness can be shown using essentially the same strategy as in Section~\ref{sec:pp-complete}. Namely, we can show that any diagram either reduces to $0$ or is non-zero and reduces to one of two normal forms, generalising the X-Z and Z-X forms from before with $\pi$ phases:
  \[
    \tikzfig{pauli-nf1} \qquad\qquad\qquad\qquad \tikzfig{pauli-nf}
  \]
  Either of these can then be shown to be unique, up to the equations in Figure~\ref{fig:pauli-zx}.
\end{remark}

Note that the following ``$\pi$-copy'' rules arise as special cases of the (sc') rule from Figure~\ref{fig:pauli-zx} by fixing either a single input or a single output:
\begin{equation}\label{eq:pi-copy}
  \tikzfig{pi-copy}
\end{equation}
In either case, $\nu' = 2^0 = 1$, so these rules hold on-the-nose (rather than up to a scalar factor). A slight variation of these rules says that Z spiders are invariant under applying a Pauli X to every leg and vice-versa. The following lemma can be shown by using \eqref{eq:pi-copy} to copy the $\pi$ from the first leg to all of the others, then cancelling pairs of $\pi$ spiders using (sp').

\begin{lemma}\label{lem:single-stab}
  \begin{equation}\label{eq:zx-stab}
    \tikzfig{z-stab}
    \qquad\qquad\qquad\qquad
    \tikzfig{x-stab}
  \end{equation}
\end{lemma}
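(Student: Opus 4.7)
My plan is to derive the Z-version of equation~\eqref{eq:zx-stab} directly and then invoke the Z/X colour symmetry of the Pauli ZX calculus to get the X-version. The key idea, already signposted in the paragraph before the lemma, is that a single X-$\pi$ on one leg of a Z spider is equivalent (via $\pi$-copy) to X-$\pi$'s on all the other legs; so starting with X-$\pi$'s on every leg, pushing one of them through the spider doubles up all the others, and the doubled $\pi$'s annihilate.

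Concretely, I would proceed as follows. First, pick an arbitrary leg of the Z spider — say the first — and apply the $\pi$-copy rule~\eqref{eq:pi-copy} to that leg. This removes the X-$\pi$ from the chosen leg and deposits one X-$\pi$ on each of the remaining $m+n-1$ legs. Those legs already carried an X-$\pi$ from the original configuration, so each now has two X-$\pi$'s stacked on a single wire. Second, I fuse each such pair with (sp'); the combined phase is $2\pi \equiv 0$, yielding a phase-free one-in-one-out X spider, which is just the identity. After these cancellations, every leg is decoration-free and the diagram is the bare Z spider, as required.

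For the second equation in~\eqref{eq:zx-stab}, no fresh work is needed: every rule I invoked — the single-leg specialisation of (sc'), the spider-fusion rule (sp'), and the identification $2\pi \equiv 0$ — is symmetric under interchanging Z and X spiders, so the same argument, mirrored in colour, proves that an X spider decorated with a Z-$\pi$ on every leg equals the bare X spider.

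I do not anticipate a real obstacle: the $\pi$-copy rule is engineered to do exactly this kind of move, and spider fusion cleans up the leftover pairs on-the-nose since $\nu' = 1$ in the single-leg specialisation. The only thing to be careful about is that the rule can be applied to any leg, whether nominally an input or an output; this is fine because spiders of either colour are invariant under bending inputs to outputs with the common cup/cap, so the input- and output-directed versions of the $\pi$-copy rule are interchangeable and the argument goes through regardless of which leg we start from.
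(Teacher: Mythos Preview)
Your proof is correct and matches the paper's own argument essentially verbatim: the paper's sketch is precisely ``use~\eqref{eq:pi-copy} to copy the $\pi$ from the first leg to all of the others, then cancel pairs of $\pi$ spiders using (sp')'', which is exactly the two-step manoeuvre you carry out, and your appeal to colour symmetry for the second equation is the obvious way to finish.
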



In addition to representing Pauli operators themselves, Pauli ZX diagrams also give us a convenient way to picture measurements of Pauli observables of the form $X \otimes \ldots \otimes X$ and $Z \otimes \ldots \otimes Z$. These measurements are given by pairs of projections on to the $+1$ and $-1$ eigenspaces of the associated Pauli operators, i.e.
\begin{equation}\label{eq:pauli-proj}
  \begin{aligned}
    \mathcal M_{X...X} & \ =\
    \left\{
      \Pi_{X...X}^{k} \ :=\ \oneovertwo \big(I + (-1)^k X \otimes \ldots \otimes X\big)
    \right\}_{k=0,1} \\
    \mathcal M_{Z...Z} & \ =\
    \left\{
      \Pi_{Z...Z}^{k} \ :=\ \oneovertwo \big(I + (-1)^k Z \otimes \ldots \otimes Z\big)
    \right\}_{k=0,1}
  \end{aligned}
\end{equation}
By direct calculation, one can show that these projections can be represented, up to a scalar factor, as the following ZX diagrams:
\begin{equation}\label{eq:pauli-meas}
  \Pi_{X...X}^{k} \ \propto\ \ \tikzfig{proj-Xk}
  \qquad\qquad
  \Pi_{Z...Z}^{k} \ \propto\ \ \tikzfig{proj-Zk}
\end{equation}
We will use these diagrams in Section~\ref{sec:lattice-surgery} to picture the Pauli measurements used to perform lattice surgery in the surface code.


\section{Maximal CSS codes as ZX diagrams}\label{sec:max-css}

Lemma~\ref{lem:single-stab}, plus the two normal forms from Section~\ref{sec:pp-complete}, will give us everything we need to prove our main theorem.

\begin{theorem}\label{thm:max-css}
  For any $\mathbb F_2$-linear subspace $S \subseteq \mathbb F_2^n$, $|\psi\>$ is stabilised by the maximal CSS code $(S, S^\perp)$ if and only if it is equivalent, up to a scalar factor, to a phase-free ZX diagram with an X-Z normal form given by $S$ (or equivalently with a Z-X normal form given by $S^\perp$).
\end{theorem}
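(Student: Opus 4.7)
The plan is to reduce the biconditional to a single statement about the state $|\psi_S\> := \sum_{b \in S}|b\>$, namely that $|\psi_S\>$ spans (up to scalar) the joint $+1$-eigenspace of the maximal CSS code $(S, S^\perp)$. Equations~\eqref{eq:span-form} and~\eqref{eq:cospan-form} of Section~\ref{sec:pp-complete} already identify this state with both the Z-X normal form whose interior spiders span $S$ and the X-Z normal form whose interior spiders span $S^\perp$; these two normal forms are just dual presentations of the same state. Combined with Lemma~\ref{lem:nf}, which ensures every phase-free ZX diagram reduces to either form, this identification shows the right-hand side of the biconditional is equivalent to $|\psi\> \propto |\psi_S\>$, so the theorem reduces to showing $|\psi_S\>$ is the unique stabilised state up to scalar.

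For the sufficiency direction I would verify directly that each generator fixes $|\psi_S\>$. For $v \in S$, the X generator $\mathcal{X}_v$ sends $|\psi_S\> \mapsto \sum_{b \in S}|b+v\> = |\psi_S\>$ since $b \mapsto b+v$ is a bijection of $S$. For $w \in S^\perp$, the Z generator $\mathcal{Z}_w$ sends $|\psi_S\> \mapsto \sum_{b \in S}(-1)^{b^T w}|b\> = |\psi_S\>$ since $b^T w = 0$ whenever $b \in S$ and $w \in S^\perp$. To stay in the spirit of the paper I would then replay these diagrammatically: in the Z-X form, $\mathcal{X}_{v_i}$ appears as $\pi$-X spiders on exactly the outputs adjacent to the interior Z spider for $v_i$, and absorbing each into its neighbouring output X spider via (sp), then re-expressing the induced $\pi$ phase as a Pauli X on the wire into Z spider $v_i$, produces a Pauli X on every one of its legs, which cancels by the ``z-stab'' clause of Lemma~\ref{lem:single-stab}. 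The Z-stabiliser argument is dual and uses the X-Z form together with the ``x-stab'' clause.

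For necessity I would invoke the Fundamental Theorem of Stabiliser Theory (Theorem~\ref{thm:ftst}). Bases of $S$ and $S^\perp$ yield $\dim S + \dim S^\perp = n$ independent generators (each basis is linearly independent, and no nontrivial product of pure-X generators can equal one of pure-Z generators), so the stabilised subspace has dimension $2^{n-n} = 1$; since $|\psi_S\>$ lies inside it by the previous paragraph, every stabilised $|\psi\>$ must be a scalar multiple of $|\psi_S\>$, hence equivalent to the claimed normal form.

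The main obstacle will be bookkeeping rather than mathematics: the theorem juggles two dualities simultaneously — the spanning-versus-defining duality between Z-X and X-Z normal forms (visible in the $S$ versus $S^\perp$ bases appearing in~\eqref{eq:span-form} and~\eqref{eq:cospan-form}), and the X-versus-Z duality that swaps $S$ and $S^\perp$ in the CSS generators. Kept straight, these dualities combine into the clean correspondence the theorem asserts; kept sloppy, one ``proves'' a statement where all the orthogonal complements silently cancel. Once the bookkeeping is fixed, the underlying verifications are each one-line facts of linear algebra over $\mathbb F_2$.
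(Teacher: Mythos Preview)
Your proposal is correct and follows essentially the same strategy as the paper: verify that the normal-form state is fixed by each CSS generator using Lemma~\ref{lem:single-stab}, then invoke Theorem~\ref{thm:ftst} to conclude uniqueness. The paper runs the stabiliser check purely diagrammatically (start at the interior spider $v_i$, invoke Lemma~\ref{lem:single-stab} to spawn $\pi$'s on all its legs, then fuse outward to the boundary), whereas you first give the clean $\mathbb F_2$-linear one-liners ($b\mapsto b+v$ bijects $S$; $b^T w = 0$ kills the sign) and only afterwards replay them graphically, but the skeleton is the same.

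Two small remarks. First, your diagrammatic replay is phrased in the reverse direction from the paper (absorb $\pi$'s at the boundary, then push inward to $v_i$), and as written this step is slightly loose: an output X spider may be wired to several interior Z spiders, so ``re-expressing the induced $\pi$ phase as a Pauli X on the wire into Z spider $v_i$'' is not a single (sp') move. Running the argument outward from $v_i$, as the paper does, avoids this ambiguity entirely; your algebraic verification is already airtight, so this is purely cosmetic. Second, your bookkeeping paragraph is prescient: you correctly align the Z-X form with a basis of $S$ and the X-Z form with a basis of $S^\perp$, consistent with \eqref{eq:span-form}--\eqref{eq:cospan-form}, whereas the theorem statement and the paper's own proof phrase it the other way round. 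Your reading is the one that is internally consistent with the explicit formulas, so your caution about ``keeping the dualities straight'' is well placed.
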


\begin{proof}
  Suppose $|\psi\>$ is described by a phase-free ZX-diagram. Then it can be translated into X-Z normal form, for some basis $\{ v_1, \ldots, v_p \}$ of an $\mathbb F_2$-linear space $S$. Then, for each $v_i$, we can apply Lemma~\ref{lem:single-stab} to introduce an X phase of $\pi$ on every wire adjacent the Z spider labelled by $v_i$ and commute it to the output using (sp'):
  \[
  \tikzfig{fire-v}
  \]
  This shows that $|\psi\>$ is invariant under the action of the Pauli operator $\mathcal X_i := \bigotimes_{k=1}^{n} X^{(v_i)_k}$. Hence, $|\psi\>$ is the $+1$ eigenstate of all of the $p$ independent X stabilisers of the maximal CSS code $(S, S^\perp)$. Similarly, we can compute the Z-X normal form of $|\psi\>$ and show that it is the $+1$ eigenstate of all $q$ independent Z stabilisers $\mathcal Z_j := \bigotimes_{k=1}^{n} Z^{(w_j)_k}$. This gives $p + q = n$ independent stabilisers for $|\psi\>$, hence it is uniquely fixed by the FTST Theorem~\ref{thm:ftst}. Conversely, any maximal CSS code fixes a state whose stabilisers are given by $\mathcal X_i$ and $\mathcal Z_j$ as before, so they will be equal to a phase-free ZX diagram with X-Z normal form given by $S$, or equivalently, with Z-X normal form given by $S^\perp$.
\end{proof}

This proof gives us an evident way of translating the stabiliser generators of a maximal CSS code into a ZX diagram. In fact, it gives us two equivalent ways:
\begin{enumerate}
  \item \textbf{X-representation:} Start with $n$ X spiders that each have a single output wire. For each X generator, introduce a Z spider with a wire connected to the $i$-th X spider whenever the operator acts non-trivially on the $i$-th qubit.
  \item \textbf{Z-representation:} Start with $n$ Z spiders that each have a single output wire. For each Z generator, introduce an X spider with a wire connected to the $i$-th Z spider whenever the operator acts non-trivially on the $i$-th qubit.
\end{enumerate}

Interestingly, we only ever need to represent one kind of stabilisers for a maximal CSS code diagrammatically, because $S^\perp$ is uniquely fixed by $S$ and vice-versa.



\section{Non-maximal CSS codes as ZX encoder maps}\label{sec:css-encode}

Stabiliser codes become useful for encoding non-trivial quantum states when they are non-maximal, i.e. generated by fewer than $n$ independent stabilisers. In general, $n' \leq n$ independent stabilisers defines a $2^{n-n'}$ dimensional subspace of $(\mathbb C^2)^{\otimes n}$. It will be useful to picture the embedding of this subspace as an isometry embedding $k := n-n'$ qubits into $n$ qubits, called the \textit{encoder}:
\ctikzfig{embedding}

On some occasions, we may actually want to implement the encoder physically, e.g. as a unitary circuit with ancillae. However, this need not be the case, and in fact most of the time, it suffices to treat this as a purely mathematical object, tracking how our $n-n'$ \textit{logical} qubits are embedded into the space of $n$ physical qubits. This is especially useful for fault-tolerant computations, where we can conclude a physical operation described by the linear map $F : (\mathbb C^2)^{\otimes n} \to (\mathbb C^2)^{\otimes n}$ acts on logical qubits as $f : (\mathbb C^2)^{\otimes k} \to (\mathbb C^2)^{\otimes l}$ precisely when:
\ctikzfig{ftqc2}
Where $E$ and $E'$ are the encoders representing the error correcting code before and after the operation, respectively. Often $k = l$ and $E = E'$, but some FTQC schemes such as lattice surgery actually contain (non-unitary) operations which change the numbers of logical qubits and the embedding into physical space. This tends to be left implicit in the stabiliser formalism, but here we make this explicit by means of encoders.

The stabilisers of an error correcting code are not information to uniquely fix the encoder $E$: they only fix the range of $E$. To fully fix $E$, one can fix $2k$ additional \textit{logical operators}. These are the physical operations $\overline{\mathcal X}_i$ (resp. $\overline{\mathcal Z}_i$) that correspond to performing an $X$ (resp. $Z$) operation on the $i$-th logical qubit. That is:
\ctikzfig{logical-ops}

When $E$ is itself a Clifford isometry, $\overline X_i$ and $\overline Z_i$ will also be Pauli operators. Hence we can use the logical operators to give us $2k$ additional stabilisers for the $n+k$ qubit state $|E\>$ obtained from $E$ by bending wires around:
\ctikzfig{logical-stabs}

This gives us $n - k + 2k = n + k$ stabilisers, totally fixing $|E\>$ and hence $E$. Thus if we fix a CSS code and set of logical operators, we can construct $E$ as a ZX diagram. Namely, we construct $|E\>$ from its maximal set of stabilisers using the recipe from the previous section and we bend the wires back down.

As an example, lets return to the 7-qubit Steane code from Example~\ref{ex:steane}. We will switch to a more compact notation for writing its stabilisers, where $X_i$ (resp. $Z_i$) corresponds to an $n$-qubit operator acting non-trivially on the $i$-th qubit with a Pauli $X$ (resp. $Z$):
\[
  \begin{aligned}
    \mathcal X_1 & := X_1 X_5 X_6 X_7 &
    \mathcal X_2 & := X_2 X_4 X_6 X_7 &
    \mathcal X_3 & := X_3 X_4 X_5 X_7 \\
    \mathcal Z_1 & := Z_1 Z_5 Z_6 Z_7 &
    \mathcal Z_2 & := Z_2 Z_4 Z_6 Z_7 &
    \mathcal Z_3 & := Z_3 Z_4 Z_5 Z_7 \\
  \end{aligned}
\]
This CSS code is non-maximal, and encodes $7-6 = 1$ logical qubits. Hence we should fix 2 additional logical operators $\overline{\mathcal X} := X_4 X_5 X_6$, $\overline{\mathcal Z} := Z_2 Z_3 Z_4$. As we noted before, we only need one kind of stabiliser to build the ZX diagram, so applying the recipe from the previous section to the $X$ stabilisers and logical operator, we obtain the following picture, where we label the logical qubit 0 and the physical qubits 1-7. We can then put the logical qubit on the left and rearrange some of the physical qubits to obtain the following:
\[
\tikzfig{steane-naive}
\qquad\leadsto\qquad
\tikzfig{steane}
\]
The ZX diagram on the right is in fact precisely the same as the one used by Duncan and Lucas to graphically verify the Steane code in~\cite{duncansteane}.

\section{Picturing the surface code}\label{sec:surface-code}

The stabiliser code encodes a single logical qubit in a square lattice of physical qubits. Larger lattices define codes with a higher code distance. In this paper, we will exclusively use the slightly more compact, ``rotated'' version of the surface code (cf.~\cite{horsman2012latticesurgery}, Section 7.1). Stabilisers are defined as follows. We start with a rectangular lattice with $d \times e$ vertices, corresponding to qubits. We aim to encode a single logical qubit, so we need to fix $de - 1$ stabilisers. To do so, we first colour in the lattice in a checkerboard pattern, where each red (darker) area corresponds to an $X$ stabiliser on all of its adjacent qubits, whereas each green (lighter) area corresponds to a $Z$ stabiliser. Colouring the inside of the lattice in this way gives $(d -1)(e - 1)$ stabilisers.
 To get all $de - 1$ stabilisers, we still need to fix $d + e - 2$ additional stabilisers. To get these, we introduce weight-2 stabilisers along the boundaries at every other edge, which we depict as ``blobs''. We colour these blobs in the oppose colour to the nearest tile, obtaining the following picture:
\begin{equation}\label{eq:surface9}
  \tikzfig{surface9}
  \qquad\leadsto\qquad
  \begin{aligned}
    \mathcal X_1 & := X_2 X_3 X_5 X_6 &
    \mathcal X_2 & := X_4 X_5 X_7 X_8 \\
    \mathcal X_3 & := X_1 X_4 &
    \mathcal X_4 & := X_6 X_9 \\
    \mathcal Z_1 & := Z_1 Z_2 Z_4 Z_5 &
    \mathcal Z_2 & := Z_5 Z_6 Z_8 Z_9 \\
    \mathcal Z_3 & := Z_2 Z_3 &
    \mathcal Z_4 & := Z_7 Z_8
  \end{aligned}
\end{equation}
There are $2(d-1) + 2(e-1)$ edges around the whole boundary, so adding a ``blob'' to every other edge gives us $(2(d - 1) + 2(e - 1))/2 = d + e - 2$ more stabilisers as required. By design, all stabilisers of different types overlap on two qubits, so they commute. Since we alternate edges, one pair of opposite boundaries (in this case the left and right) will end up with X-blobs and one pair (top and bottom) with Z-blobs. In the literature, these are called \textit{X-boundaries} and \textit{Z-boundaries}, respectively.

The logical $\overline{\mathcal X}$ operator consists of a line of Pauli $X$ operators connecting the two X-boundaries, whereas the logical $\overline{\mathcal Z}$ operator consists of a line connecting the two Z-boundaries:
\[
  \tikzfig{surface9-logical-X}
  \leadsto\ \ 
  \overline{\mathcal X} := X_7 X_8 X_9
  \qquad\qquad
  \tikzfig{surface9-logical-Z}
  \leadsto\ \ 
  \overline{\mathcal Z} := Z_1 Z_4 Z_7
\]

Note that the specific choice of path between the boundaries is not important and we are even allowed to cross tiles diagonally. However, it is important that the path touches each area of the opposite colour an even number of times. This ensures that the logical operator commutes with all of the stabilisers. In the example above we could have equivalently chosen $X_4 X_5 X_6$ or $X_5 X_6 X_7$ for  $\overline{\mathcal{X}}$, but not $X_7 X_8 X_5 X_6$, because the latter touches a green tile 3 times.

Using the stabilisers and the logical operators for the surface code, we can apply the recipe from the previous section to construct its encoder. In fact, we can represent the encoder using two equivalent ZX diagrams, one based on the X-stabilisers and one on the Z-stabilisers:
\begin{equation}\label{eq:surface-encoder}
  \tikzfig{X-encoder} \ \ =\ \ 
  \tikzfig{Z-encoder}
\end{equation}
Note how the diagrams of the encoders have a direct visual relationship to the picture~\eqref{eq:surface9}: to draw the X-stabilisers, we put an X spider on every vertex, place a Z spider in the centre of each red region in \eqref{eq:surface9}, and connect it to all of the adjacent vertices.  Finally, we ``embed'' the logical X operator by placing a Z spider on the input and connecting it to each of the vertices where $\overline{\mathcal X}$ has support. For the Z-stabilisers, we apply the same routine, reversing the roles of X and Z.

In the surface code, we can topologically deform a logical operator by multiplying it by any stabiliser. We can perform the same calculation graphically using strong complementarity. As we saw in the proof of Lemma~\ref{lem:subspace}, we can treat spiders as bit-vectors, and by applying strong complementarity, we can ``add'' the neighbourhood of one spider to that of another, modulo 2. Applying this concept to the surface code, we obtain for example:
\begin{equation}\label{eq:X-encoder-add}
  \tikzfig{X-encoder-add}
\end{equation}
This just amounts to changing the basis for the linear space $S$ represented by this normal form, which has the same effect as changing the generators for the associated stabiliser group. We will use this property in the next section to move logical operators around in the surface as needed.

\subsection{Lattice surgery}\label{sec:lattice-surgery}

Lattice surgery~\cite{horsman2012latticesurgery} is a particular technique for implementing multi-qubit operations in the surface code fault-tolerantly. We will describe the basic multi-qubit logical operations using the ZX notation introduced in~\cite{horsman2017surgery}. The first class of operation splits a logical qubit into two, and comes in two varieties:
\[
  \textit{Z-split} \ :=\ \ \tikzfig{zsplit}
  \qquad\qquad
  \textit{X-split} \ :=\ \ \tikzfig{xsplit}
\]
As these are both isometries, they can be performed deterministically. However, the dual operation that merges two logical qubits into one is non-deterministic:
\[
  \textit{Z-merge} \ :=\ \ \left\{ \, \tikzfig{zmerge} \, \right\}_{k=0,1}
  \qquad\qquad
  \textit{X-merge} \ :=\ \ \left\{ \, \tikzfig{xmerge} \, \right\}_{k=0,1}
\]
In other words, each of these operations is a degenerate measurement that projects the 4D space of two logical qubits onto a 2D space, which we can then treat as a single logical qubit.

These operations can be combined to produce multi-qubit operations such as a CNOT on the logical level:
\ctikzfig{merge-split-cnot}
up to a possible Pauli error, which can be corrected for in subsequent operations. If one is additionally able to prepare single logical qubits in a handful of fixed states, and can use merge operations to obtain arbitrary single-qubit gates, and hence a universal model of computation.

This explains how lattice surgery works on the logical level. But what about the physical level? For this, we should find physical operations $\textrm{SPLIT}$ and $\{ \textrm{MERGE}_k \}_{k=0,1}$ which we can ``push'' through the encoder map $E_{x\times y}$ associated with an $(x \times y)$-sized patch of surface code as follows:
\[
  \tikzfig{lattice-surgery-split}
  \qquad\qquad
  \tikzfig{lattice-surgery-merge}
\]

We'll demonstrate these operations concretely on $3\times 3$ surface code patches, but in fact the same derivation will work for surface code patches of any size. Let's start with the Z-split, which is performed on a $d \times 2e$ patch of surface code by performing XX measurements down the $e$-th column as if this were the rightmost X boundary of a $d \times e$ surface code patch. This will have the overall effect of splitting the patch in twain.

For this derivation, it will be most convenient to use the \textbf{X-representation} of the encoder. To perform the split itself, we do a XX measurements down the 3rd column as if this were the right boundary of a $3 \times 3$ surface code patch. In this case, there is only one XX measurement to do. We can then use the $\pi$-copy rule~\eqref{eq:pi-copy} to push the $j\pi$ phase coming from the measurement outcome on to the outputs:
\ctikzfig{X-split-meas-ec}
Note we write $\overset{\textrm{e.c.}}{\equiv}$ to mean the two diagrams are equal up to ``error correction'', i.e. Pauli operators applied just on the outputs. 
These can be treated as errors on the physical qubits and corrected later, so we will disregard them in our calculation.

Using the complementarity rule (c), we see that the existing $X^{\otimes 4}$ stabiliser becomes a pair of $X^{\otimes 2}$ stabilisers. This can then be translated into a Z-copy followed by two encoders by unfusing the bottom spider:
\[
\tikzfig{X-split-Z-1}
\]

Next we do an X-merge by performing $X$ stabiliser measurements along the boundary between two vertically-stacked surface code patches, as if these were stabilisers of one big $6 \times 3$ patch. We can eliminate the $\pi$ phases from the encoder by error correction, but this time we pick up a phase of $k\pi$ where $k = j_1 \oplus j_2$ on the input (or more generally $k = j_1 \oplus \ldots \oplus j_q$ for bigger patches). We can then use the ``deformation'' trick from equation \eqref{eq:X-encoder-add} to move the two logical operators on top of each other and apply strong complementarity:
\ctikzfig{X-merge-Z}

Note that the applications of the spider law, complementarity, strong complementarity, and $\pi$-copy rules in these two derivations extend naturally to larger surface code sizes. Also note that reversing the colours of these two derivations and rotating $90^\circ$ gives recipes for remaining two operations of X-split and Z-merge, using the Z-representation of the surface code embedding rather than the X-representation.


\section{Conclusion and future work}

This paper showed that the most important concepts from CSS codes translate directly to phase-free ZX diagrams, owing to a shared $\mathbb F_2$-linear structure under the hood. This insight was then employed to give an intuitive ZX diagram picture of surface codes which comes very close to the way they are actually pictured by practitioners, with one major bonus: unlike the mnemonic pictures of stabilisers like in equation~\eqref{eq:surface9}, the ZX form actually admits rigorous calculation directly on the diagram. This then gives a nice technique for deriving lattice surgery operations.

Natural next directions are looking at other fault-tolerant computations on CSS codes, for instance defect braiding~\cite{raussendorf2007defect} and lattice surgery on colour codes~\cite{landahl2014colorsurgery}. The latter in particular should follow in much the same way as the calculations in Section~\ref{sec:lattice-surgery}. This may in turn suggest operations analogous to lattice surgery on previously unconsidered families of CSS codes. Furthermore, it has recently been shown the Kitaev model in higher dimensions admits lattice surgery operations generalising those of the surface code~\cite{cowtan2022qudit}. It is likely one can replay the work done in this paper using the qudit ZX calculus~\cite{ranchin2014qudit}. In particular, the results in \cite{bonchi2017interactinghopf} entail a generalisation of the normal form and completeness results for the phase-free ZX calculus given in Section~\ref{sec:pp-complete} from qubits to qudits of any prime dimension. Hence, one should be able to promote the connection between ZX diagrams and CSS codes to at least any prime dimension, although one encounters some new technicalities that don't arise in the qubit case. Notably, qudit ZX diagrams are no longer undirected graphs, so one needs to either pay attention to the direction of wires or use an alternative presentation based on harvestmen~\cite{carette2020recipe}.

Finally, one could hope to extend the methodology presented here from CSS codes to all stabiliser codes or at least to more exciting stabiliser computations on CSS codes, such as those described in e.g.~\cite{litinski2019game,brown2017poking}. The correspondence between Clifford ZX diagrams and stabiliser codes suggests this is possible in principle, but normal forms of Backens~\cite{BackensCompleteness} and Borghans~\cite{borghans2019masters} are difficult to work with in the way we have done here. Hence it would seem that, almost a decade after completeness was proven for the stabiliser fragment of the ZX calculus, new ideas are still needed.

\bibliographystyle{plain}
\bibliography{main}

\end{document}